\newtheorem{defn}{Definition}
\newtheorem{thm}{Theorem}[section]
\newtheorem{cor}[thm]{Corollary}
\newtheorem{prop}{Proposition}
\newtheorem{lem}[thm]{Lemma}
\newtheorem{example}{Example}
\newtheorem{conj}[thm]{Conjecture}
\newtheorem{constr}[thm]{Construction}
\newtheorem{note}{Remark}
\newcommand{\bit}{\begin{itemize}}
\newcommand{\eit}{\end{itemize}}
\newcommand{\bcor}{\begin{cor}}
\newcommand{\ecor}{\end{cor}}
\newcommand{\beq}{\begin{equation}}
\newcommand{\eeq}{\end{equation}}
\newcommand{\beqn}{\begin{equation*}}
\newcommand{\eeqn}{\end{equation*}}
\newcommand{\bea}{\begin{eqnarray}}
\newcommand{\eea}{\end{eqnarray}}
\newcommand{\bean}{\begin{eqnarray*}}
\newcommand{\eean}{\end{eqnarray*}}
\newcommand{\ben}{\begin{enumerate}}
\newcommand{\een}{\end{enumerate}}
\newcommand{\bdefn}{\begin{defn}}
\newcommand{\edefn}{\end{defn}}
\newcommand{\bnote}{\begin{note}}
\newcommand{\enote}{\end{note}}
\newcommand{\bprop}{\begin{prop}}
\newcommand{\eprop}{\end{prop}}
\newcommand{\blem}{\begin{lem}}
\newcommand{\elem}{\end{lem}}
\newcommand{\bthm}{\begin{thm}}
\newcommand{\ethm}{\end{thm}}
\newcommand{\bconj}{\begin{conj}}
\newcommand{\econj}{\end{conj}}
\newcommand{\bconstr}{\begin{constr}}
\newcommand{\econstr}{\end{constr}}
\newcommand*{\Resize}[2]{\resizebox{#1}{!}{$#2$}}%
\title{Optimal Haplotype Assembly from High-Throughput Mate-Pair Reads}
\author{
Govinda M. Kamath$^{1}$, Eren \c{S}a\c{s}o\u{g}lu$^{2}$ and David Tse$^{1}$ \\
$^{1}$ Department of Electrical Engineering, Stanford University, Stanford, USA.\\
$^{2}$ Department of Electrical Engineering and Computer Science, University of California, Berkeley,  USA.\\
Email: gkamath@stanford.edu, eren@eecs.berkeley.edu, dntse@stanford.edu.\thanks{ \scriptsize This work is partially supported by the Center for Science of Information (CSoI), an NSF
Science and Technology Center, under grant agreement CCF-0939370.}}
\date{\today}
\begin{document} \maketitle
 \begin{abstract}
 Humans have $23$ pairs of homologous chromosomes. The homologous pairs are almost identical pairs of chromosomes.
 For the most part, differences in homologous chromosome occur at certain documented positions called single nucleotide polymorphisms (SNPs).
 A haplotype of an individual is the pair of sequences of SNPs on the two homologous chromosomes. In this paper, we study the problem of inferring haplotypes of individuals from mate-pair reads of their genome. We give a simple formula for the coverage needed for haplotype assembly, under a  generative model. The analysis here leverages connections of this problem with decoding convolutional codes.
\end{abstract}
 \section{Introduction}
 
 Humans, like most mammals are diploid organisms, $i.e.$ all somatic
 cells of humans contain two copies of the genetic material. Humans
 have $23$ pairs of homologous chromosomes. Each member of a pair of
 homologous chromosomes, one paternal and the other maternal,  has
 essentially the same genetic material, one exception being the sex-chromosomes in males. For the most part, homologous
 chromosomes differ in their bases (which take values in the
 alphabet $\{A,C,G,T\}$) at specific positions known as
 Single Nucleotide Polymorphisms (SNP). Other types of differences
 such as insertions and deletions are also possible, but these are
 rare occurrences, and are ignored in this paper. There are
 around $3$ million known SNPs in humans, whose genome is of length
 approximately $3$ billion base pairs. Thus on average a SNP appears
 once in  $1000$ base pairs. The positions where SNPs occur are well
 documented (see for instance \cite{dbSNP}). Usually only one of two bases
 can be seen at any particular SNP position in the population. The one that is seen in the
 majority of a population is referred to as the {\it  major allele}, and the
 other is referred to as the {\it minor allele}. The two sequences of SNPs,
 one on each of the homologous chromosomes, is called the {\it haplotype}
 of an individual.  A person is said to be {\it homozygous} at
 position $i$ if both homologous chromosomes have the same base at
 position $i$.  Otherwise the person is said to be {\it heterozygous}
 at position $i$.
 
  The haplotype of an individual provides important information in applications
  like personalized medicine, and understanding phylogenetic trees.
  The standard method to find a person's haplotype is to first find
  her \emph{genotype}, which is the set of allele pairs in her genome.
  Haplotype information, i.e., which allele lies in which chromosome, is then statistically inferred from a set of
  previously known haplotypes based on population genetics models \cite{BroBro}.  Often, haplotypes of
  many individuals from a population are inferred jointly. 
  This process is referred to as {\it
  haplotype phasing}. A major drawback of this approach is that many
  individuals in a population need to be sequenced to get reliable estimates
  of the haplotype of one person.
 
 The advent of next generation sequencing technologies provides an
 affordable alternative
 to haplotype phasing.  In particular, these technologies allow one to
 quickly and cheaply read the bases of hundreds of millions of short genome fragments, called
 \emph{reads}. 
 One can
 \emph{align} such reads to a known human genome, thereby
 determining their physical location in the chromosome.  
 Aligning a read to a reference does not reveal whether that read comes from the paternal or
 the maternal chromosome, however, hence the non-triviality of determining the
 haplotype.  

 Clearly, if a read
 covers zero or one SNP, it does not contain any information about how
 a particular SNP relates to other SNPs on the same chromosome, and
 is thus useless for haplotype assembly.  That is, a read helps in determining the haplotype only 
 if it covers at least two SNPs.  This may seem like a problem at
 first sight, since as we mentioned above, adjacent SNPs are separated on
 average by $1,000$ bases, but most
 sequencing technologies produce reads of length only a few
 hundred bases.  Fortunately, with somewhat more labour intensive library preparation 
 techniques, one can produce {\it mate-pair reads}, i.e., reads that 
 consist of two genome fragments separated by a number of bases. The
 length of the separation between the two segments of DNA read is
 known as the {\it insert size}. For popular technologies like
 Illumina, the read lengths are around $90-100$ base pairs (bp),
 and the insert size ranges from around $300$ bp to $10,000$ bp,
 with the median insert size around $3,000$ bp (See Figure \ref{fig:insert_size}, which was
 taken from  \cite{Nextera}). These
 reads offer a possibility of inferring haplotypes from them. However,
 errors in reads present a major challenge. Both
 single and mate-pair reads are illustrated in Figure \ref{fig:read_generation}. 
 
     \begin{figure}
  \centering
   \includegraphics{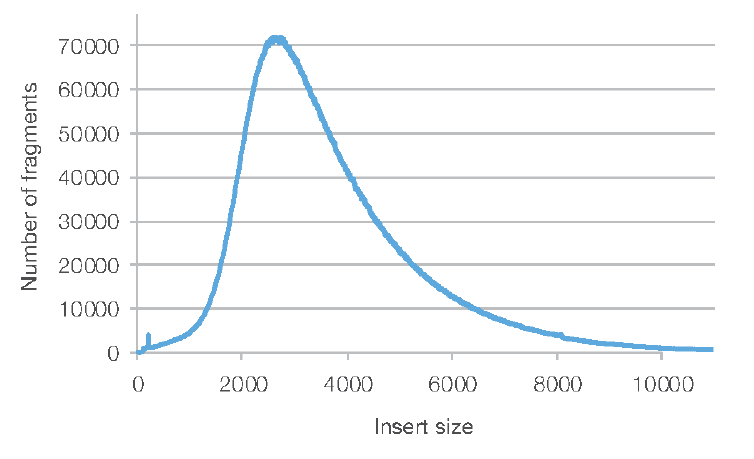}
   \caption{\small Insert size distribution of Human NA $12877$ library sequenced on Illumina HiSeq\textsuperscript{\circledR} $2500$. Median insert size was $3400$ bp here. Figure taken from \cite{Nextera}. }
   \label{fig:insert_size}
 \end{figure}

 In this paper, we characterize the coverage, i.e.,
 the number of reads that cover at least two SNP positions, 
 required for haplotype assembly from paired reads.  We determine this
 quantity as a function of the number of SNPs $n$, 
 read error probability $p$, 
 and the separation between the two reads in each  mate-pair reads in terms of number of SNPs given by a random variable $W$. In particular, we
 show that for uniformly distributed read locations, the coverage
 required for a maximum-likelihood (ML) assembler to succeed is asymptotically
 $$
 \frac{n\log n}{\min\{\mathbb{E}[W],2\}(1-e^{-D(0.5||2p(1-p))})},
 $$  
 and no assembler can succeed on all possible haplotypes with fewer reads. 
 In proving this result, we show the haplotype assembly problem's connection to
 that of decoding a convolutional code, and use analysis methods
 for such codes. We note that we refer to natural logarithms by $\log$ throughout the manuscripts.
 
      \begin{figure}
  \centering
   \includegraphics[ clip=true, scale=0.6]{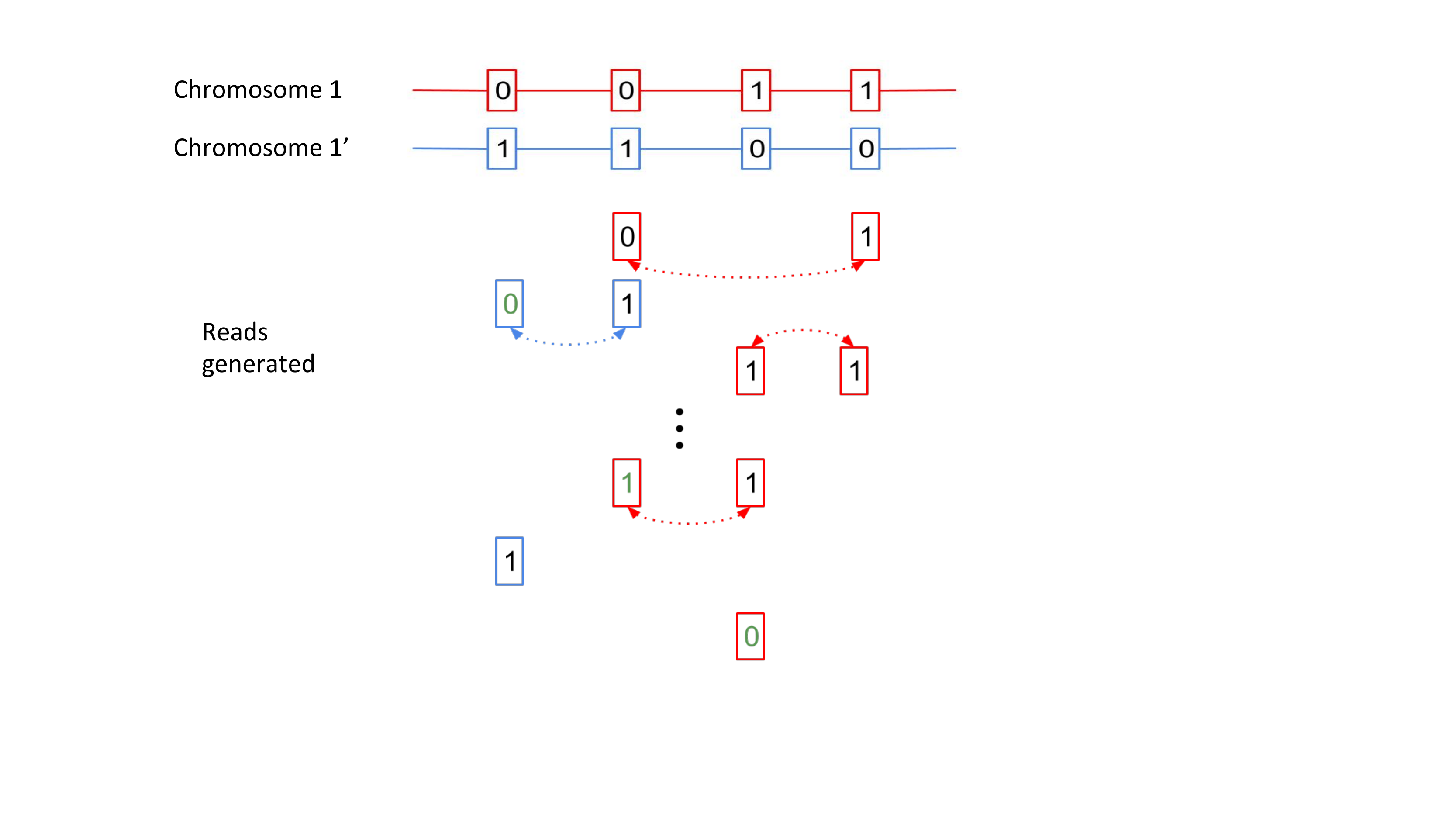}
   \caption{\small An illustration of read generation. The genetic material in chromosome $1$ and chromosome $2$ are identical apart from in the SNP positions shown. Here we represent the major allele by $0$ and the minor allele by $1$. Reads from chromosome $1$ are represented in red and those from chromosome $2$ are represented in red. We do not know which reads come from which chromosome. Mate-pair reads are indicated by dotted lines in between them. Errors are shown in green. \vspace{0.1 in} }
   \label{fig:read_generation}
 \end{figure}

 The haplotype assembly problem has been studied previously.
 In particular, that maximum-likelihood (ML) estimation can be done
 using dynamic programming 
 in a similar setting was first recognized in 
 \cite{HeEtal10}.   
 The problem was also studied recently in \cite{SiVikVis}.
 There, it is assumed that a mate-pair
 could cover any two SNPs on the chromosome and bounds are derived on the
 number of reads needed to carry out haplotype assembly using a
 message passing algorithm. This however is an unrealistic assumption,
 as the chromosome lengths are several million, but the insert lengths
 are typically in the thousands. In this paper, we consider a
 model where reads only cover SNPs which are nearby.

   \section{Noiseless reads covering adjacent SNPs}  
  Consider first the case where all reads are noiseless. 
  For convenience, we shall refer to the
  major allele by $1$ and the minor allele by $0$.  We assume that the
  genome is heterozygous at all SNP positions, an assumption we will justify 
  shortly.  That is, if $S_1,S_2,
  \cdots,S_n$ are the $n$ SNPs on a chromosome, then the other
  chromosome has SNPs $S_1^c,S_2^c, \cdots,
  S_n^c$.  Each mate-pair covers adjacent SNPs
  $i$ and $i+1$, where $i$ is picked from $\{1,2, \cdots, n-1 \}$
  uniformly at random.  As mentioned above, the reads are assumed to
  be aligned  to the reference genome,
  and therefore the genomic locations of all reads are known. A
  read covering SNP $i$ and SNP $i+1$, will output $(S_i, S_{i+1})$
  with probability $\frac{1}{2}$ and SNP  $(S_i^c, S_{i+1}^c)$ with
  probability $\frac{1}{2}$ depending on which of the two homologous
  chromosomes it is read from.    

 Determining the necessary and sufficient coverage in this setting is
 relatively simple.  Indeed, note that the parity of $S_i$ and $S_{i+1}$, which
 equals the parity of $S_i^c$ and $S_{i+1}^c$ is a sufficient
 statistic we get from reads covering SNP $i$ and SNP $i+1$. Thus we
 can reconstruct the haplotype when we see a read covering each of the
 $n-1$ adjacent pair of SNPs.  Note that this is equivalent to the coupon collector
 problem, and therefore reconstruction will be correct with high
 probability if one has 
 $c(n-1)\log (n-1)$ reads with $c>1$, whereas if $c<1$, then reconstruction
 will fail with high probability.

  \section{Noisy reads covering adjacent SNPs}\label{sec:adjacent}
  Next, consider the case where each mate-pair again covers a pair of
  adjacent SNPs and aligned perfectly to a reference genome, but
  now each read in each pair is in error at the SNP position
  with probability $p$ independent of everything else.  Similarly to the noiseless case, we wish to
  characterize optimal coverage as a function of $p$.  We will see
  that $O(n\log n)$ is the correct scaling here as well. 
  In particular, we will say that $c(p)$ is the optimal coverage if 
 for any $\epsilon>0$, 
 $(c(p)-\epsilon)n\log n$ reads are insufficient to reconstruct with
 probability at least $1-\epsilon$, but
 $(c(p)+\epsilon)n\log n$ are sufficient as $n \rightarrow \infty$. This makes optimal
 coverage only a function of $p$.
  
  We again assume that each SNP position is heterozygous.  This is a
  reasonable 
  assumption since the positions of SNPs in
  humans are typically known, and thus one can test every SNP position for
  heterozygosity.
  This can be done using reads that cover a
  single SNP position, as we mentioned above, such reads are
  much easier to obtain in large numbers compared with mate-pairs.
  Thus, the coverage required for reliable heterozygosity testing can be met
  easily.  
 
  We will not fix the number of reads $M$ to be $c(p)(n-1)\ln n$ but instead allow $M$ to be random,
 and
 in particular, be distributed as $\text{Poiss}(c(p)(n-1)\ln n)$.  This relaxation
 only simplifies the analysis and does not affect the result.  Indeed,
 note that such a random variable is the sum of $n\ln n$ i.i.d.
 $\text{Poiss}(c(p))$ random variables, and therefore by the law of large
 numbers will take values in $[c(p)\pm\epsilon]n\ln n$ with high
 probability.
 We assume that SNP positions $1,2,\cdots, n-1$ are
 equally likely to be the first position covered by each read.

 Here again, the set of parities of adjacent SNPs $i$ and $i+1$
 are a sufficient statistic to reconstruct.  
 As the probability of error in each SNP in each read is $p$, and
 errors are independent
 across reads, we have that the probability that a read gives us the wrong parity is 
 \begin{equation}
  \theta=2p(1-p).
 \end{equation}

 Let $L_1, L_2, \cdots, L_{n-1}$ denote the true parities, $i.e.$
 $L_i=S_i+S_{i+1}=S_i^c+S_{i+1}^c$. For example, in Figure \ref{fig:read_generation}, $n=4$, $S_1=S_2=0$, $S_3=S_4=1$, $L_1=0=L_3$, $L_2=1$.
 For $\theta \in [0,1]$, let the function $D(\theta)$ be defined as,
 \begin{equation}
  D(\theta)=D(0.5||\theta),
 \end{equation}
  where $D(\cdot||\cdot)$ denotes relative entropy measured in nats.  Also let 
  $Z_{i1},\cdots, Z_{i,N_i}$ be the observations of $L_i$, and note that due to
  Poisson splitting, we have $N_i \sim$ Poiss$(c(p)\log n)$. Then we have that,
  \begin{equation}
   Z_{i,j}= L_i \oplus \epsilon_{i,j},
  \end{equation}
  where $\epsilon_{i,j}$  are Bern$(\theta)$ random variables
  independent of each other, and $\oplus$ denotes modulo-$2$ addition.
  The assembly problem therefore is equivalent to decoding the $L_i$s from the observations
  $Z_{i,j}$s, as in a repetition code, where each message symbol ($L_i$ $1 \leq i \leq n-1$) is repeated $N_i$ times.

%
 
 
 Under these assumptions, we see that the maximum-likelihood (ML) rule is to declare
 $L_i=0$ (i.e., SNPs $(i,i+1)$ agree) if more than $\frac{N_i}{2}$
 of the observations are $0$ and declare $L_i=1$ (SNPs $(i,i+1)$
 disagree) otherwise.  
 Note that each observation is correct with probability $\theta$, and thus the number of correct observations $\sim$ Bin$(N_i,\theta)$.  Thus we have that,
 \begin{eqnarray}
  \Resize{4.5 cm}{P(\text{wrong parity for } (i,i+1) | N_i)} &\leq&  \Resize{3.3 cm}{P( \text{Bin}(N_i,\theta) \geq \frac{N_i}{2} | N_i)}\nonumber \\ 
  &\leq& e^{-N_i D(\theta)}.
 \end{eqnarray}

 Assuming that $\theta < \frac{1}{2}$, we also have that (from chapter $12$ of \cite{CovTho}) 
 \begin{eqnarray}
  P\left(
  \text{\small wrong parity}
  \text{ for } (i,i+1)
 \big| N_i \right) &\geq& \frac{1}{2} P \left( \text{Bin}(N_i,\theta) \leq \left\lfloor \frac{N_i}{2} \right\rfloor | N_i\right), \nonumber \\
  &\geq& \frac{1}{2} \frac{1}{(N_i+1)^2} e^{-N_i D(\theta)},
 \end{eqnarray}
 where we have used the fact that $D(p||q)$ is a monotonically increasing of $p$ for a fixed $q$, in the regime $p>q$.
 Some arithmetic then reveals that 
 \begin{eqnarray}
  P(\text{wrong parity for } (i,i+1)) 
  \le \left(\frac{1}{n}\right)^{c(p)(1-e^{-D(\theta)})},
 \end{eqnarray}
 
 and that
 \begin{align}
  P \left(
  \text{\small wrong parity}
  \text{ for } (i,i+1)
 \right)  
  &\ge  \frac{1}{2} (c(p) e^{D(\theta) }\log n )^{-2} \Bigg(  \left(\frac{1}{n}\right)^{c(p) (1-e^{-D(\theta)})}   - \frac{1}{n} - \left(\frac{1}{n}  \right)c(p)e^{D(\theta)} \log n   \Bigg) \label{eq:converse1}
 \end{align}
 
 This leads to the following coverage result:
 \begin{thm}
  \[c(p)=\frac{1}{1-e^{-D(\theta)}}\]
  is the optimal coverage.
 \end{thm}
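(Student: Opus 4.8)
The plan is to prove the two matching halves of the optimality claim, reducing both to the single-position parity-error estimates already derived in the excerpt. Write $c$ for the coverage multiplier, so that $M \approx c\,n\log n$ and $N_i \sim \mathrm{Poiss}(c\log n)$; the per-position bounds established above then read $P(\text{wrong parity for }(i,i+1)) \le n^{-c(1-e^{-D(\theta)})}$, together with the matching lower bound of \eqref{eq:converse1}. The achievability half shows that $c > \frac{1}{1-e^{-D(\theta)}}$ suffices, while the converse shows that $c < \frac{1}{1-e^{-D(\theta)}}$ forces failure, pinning the optimal coverage at the stated value.

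For achievability, I would first observe that exact reconstruction holds iff every one of the $n-1$ parities $L_i$ is recovered, so by the union bound the failure probability is at most $(n-1)\,n^{-c(1-e^{-D(\theta)})}$, which is of order $n^{1-c(1-e^{-D(\theta)})}$. This vanishes precisely when $c(1-e^{-D(\theta)}) > 1$, i.e.\ $c > \frac{1}{1-e^{-D(\theta)}}$, giving sufficiency for every $c = \frac{1}{1-e^{-D(\theta)}} + \epsilon$.

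For the converse, the structural fact I would exploit is that Poisson splitting makes the counts $N_i$, and hence the per-position error events, mutually independent, since reads covering distinct adjacent pairs are disjoint. Hence $P(\text{all parities correct}) = \prod_{i=1}^{n-1}\bigl(1 - P(\text{wrong parity for }(i,i+1))\bigr) \le \exp\!\bigl(-\sum_{i=1}^{n-1} P(\text{wrong parity for }(i,i+1))\bigr)$. Feeding in the lower bound from \eqref{eq:converse1}, each summand is $\Theta\!\bigl(n^{-c(1-e^{-D(\theta)})}/(\log n)^2\bigr)$, so the sum is of order $n^{1-c(1-e^{-D(\theta)})}/(\log n)^2$. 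When $c < \frac{1}{1-e^{-D(\theta)}}$ the exponent $1-c(1-e^{-D(\theta)})$ is strictly positive, and since a positive power of $n$ dominates any power of $\log n$, the sum diverges and $P(\text{all correct}) \to 0$.

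The main obstacle is making the converse aggregation rigorous: I must verify that in \eqref{eq:converse1} the leading term $n^{-c(1-e^{-D(\theta)})}$ genuinely dominates the subtracted $1/n$ and $\tfrac{1}{n}c\,e^{D(\theta)}\log n$ corrections. This holds exactly in the regime $c(1-e^{-D(\theta)}) < 1$, where the leading term decays slower than $1/n$, so the parenthesized expression is positive and asymptotically equal to its leading term. Once this is confirmed, the per-position lower bound is the claimed $\Theta\!\bigl(n^{-c(1-e^{-D(\theta)})}/(\log n)^2\bigr)$, and the polynomial-versus-polylog comparison closes the converse. Matching the two thresholds then identifies the optimal coverage as exactly $\frac{1}{1-e^{-D(\theta)}}$.
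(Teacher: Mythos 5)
Your proposal is correct and follows essentially the same route as the paper: both halves reduce to the per-position parity-error bounds already derived, with achievability handled by a union bound (the paper uses the equivalent product bound $(1-n^{-c(1-e^{-D(\theta)})})^{n-1}$) and the converse by aggregating the lower bound of \eqref{eq:converse1} across the $n-1$ independent positions (the paper again via the product, you via $\prod_i(1-p_i)\le\exp(-\sum_i p_i)$). Your explicit check that the leading term of \eqref{eq:converse1} dominates the $1/n$ and $(\log n)/n$ corrections precisely in the regime $c(1-e^{-D(\theta)})<1$ is the same arithmetic the paper absorbs into its constants $C_1,C_2,C_3$.
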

 
 \begin{proof}
  Note that
  \begin{eqnarray}
   P(\text{Correct reconstruction}) &=& P(\text{All parity estimates
   are correct}) \nonumber\\
   &\geq& \left(1- \left(\frac{1}{n}\right)^{c(p)(1-e^{-D(\theta)})}\right)^{n-1}, \nonumber \\
   &\rightarrow& 1, \text{ if } c(p)> \frac{1}{1-e^{-D(\theta)}}.
  \end{eqnarray}
  Further, we have 
  \begin{align}
   P(\text{Correct reconstruction}) &\leq \Bigg(1- \frac{1}{C_1\log^2 n} \left(\frac{1}{n}\right)^{c(p)(1-e^{-D(\theta)})}   
    + \frac{1}{n C_2\log^2 n} + \frac{1}{n C_3\log n} \Bigg)^{n-1}, \nonumber \\
   &\rightarrow 0, \text{ if } c(p)< \frac{1}{1-e^{-D(\theta)}},
  \end{align}
  where $C_1,C_2,C_3$ are positive constants as derived in
  \eqref{eq:converse1}.
 \end{proof}


 \section{Noisy reads covering non-adjacent SNPs} \label{sec:non_adjacent}
 
 Next, we consider the case where mate-pairs cover more than just the
adjacent SNPs. In particular, we will assume that each mate-pair 
covers $S_i,S_{i+W}$, where $i$ is uniform as before, and $W$ is a 
random integer between $1$ and $w$, independently chosen for each 
read. $W$ represents the separation between the two reads of a mate-pair read
measured in terms of number of SNPs between them.

We will consider three cases, in order of increasing
complexity:
 \subsection{$W$ is either $1$ or $2$ with equal probability}
 Let us first consider the case where we have
 observations of adjacent parities and skip-$1$ parities, $i.e.$
 parities $S_i+S_{i+1}$ and $S_i+S_{i+2}$ respectively. Let $N_i^{(1)}$
 be the number of noisy observations of $S_i+S_{i+1}$ and 
 $N_i^{(2)}$ be the number of noisy observations of $S_i+S_{i+2}$.
 Similarly to the previous case, each read consists of
 a uniformly chosen $i$, paired with $i+1$ or $i+2$ with probability
 $1/2$.  Therefore with a total of $\text{Poiss}(c(p)n\log n)$ reads, we have
 $N_i^{(1)}, N_i^{(2)} \sim \text{Poiss}(\frac{c(p)}{2} \log n)$.
 
 Let $Z_{i,j}^{(1)}$ be the $j$th noisy observation of $S_i+S_{i+1}$, 
 and similarly $Z_{i,j}^{(2)}$ be the
 $j$th noisy observation of $S_i+S_{i+2}$
 $(i,i+2)$. That is,
 \begin{eqnarray}
  Z_{i,j}^{(1)}= L_i \oplus \epsilon_{i,j}^{(1)}; \ 1 \leq i \leq n-1, \ 1 \leq j \leq N_i^{(1)}, \\
  Z_{i,j}^{(2)}= L_i \oplus L_{i+1} \oplus \epsilon_{i,j}^{(2)}, \ 1 \leq i \leq n-2, \ 1 \leq j \leq N_i^{(2)}.
 \end{eqnarray}
 
%

\begin{figure}[ht]
\centering
\begin{subfigure}[b]{0.6\linewidth}
\includegraphics[width=0.6\linewidth]{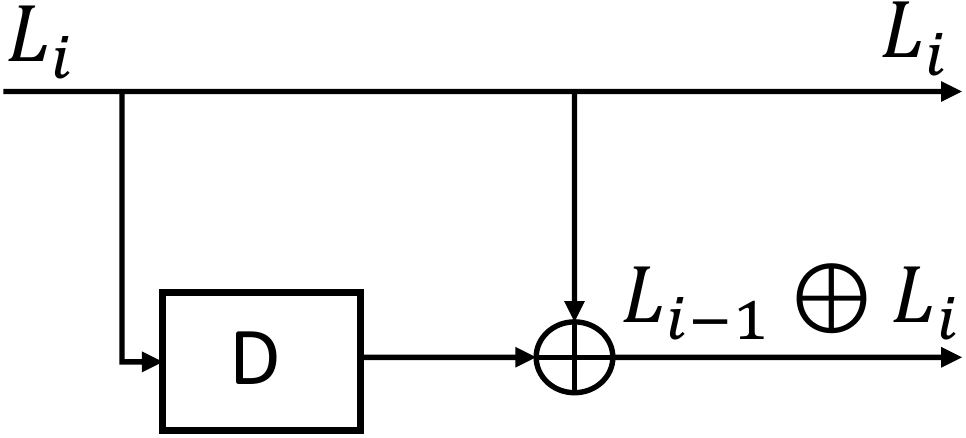}
\label{fig:skip_1}
\end{subfigure}
\quad
\begin{subfigure}[b]{0.3\linewidth}
\includegraphics[width=0.6\linewidth]{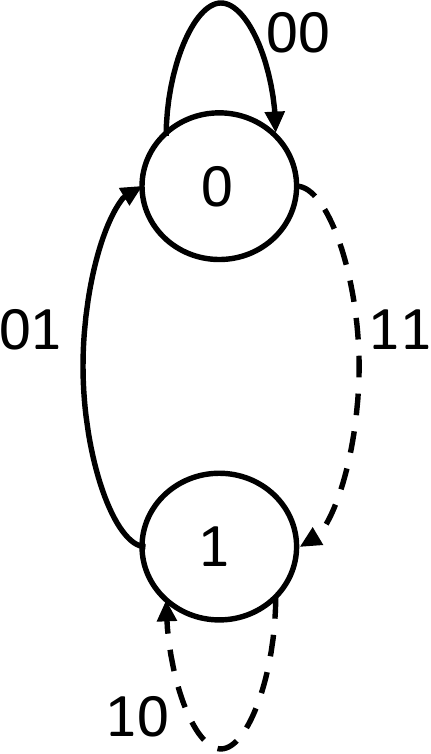}
\label{fig:skip_1_state}
\end{subfigure}
\caption{\small The convolutional code corresponding to the case where we see adjacent and skip-$1$ parities (on the left) and its state diagram (on the right). The transitions corresponding to input $0$ is shown by solid lines and those corresponding to input $1$ are shown by dashed lines. Each output here is seen correctly Poiss$((1-\theta)\frac{ c(p)}{2}\log n)$ times and incorrectly Poiss$(\theta \frac{c(p)}{2}\log n)$ times. \vspace{0.1 in}}
\label{fig:conv}
\end{figure}

 Note that the assembly problem is now equivalent to decoding the $L_i$'s from noisy observations
 of the $L_i$'s and $(L_i+L_{i+1})$'s.  Observe that this is equivalent to decoding 
 a rate $\frac{1}{2}$ convolutional code whose polynomial generator matrix is
 given by (Figure~\ref{fig:conv})
 \begin{equation}
  g(z) = \left[ \begin{array}{cc} 1 & 1+z \end{array}\right].
 \end{equation}
 Each output bit of the code is repeated  Poiss$(
 \frac{c(p)}{2}\log n)$ times, and passed through a binary
 symmetric channel (BSC) with cross over probability $\theta$.
 Unlike coding for communication, however, here one cannot initiate/terminate the code
 as desired, since the values of SNPs are given, and hence we can not $0$ pad as in the communication setting.   ML decoding
 can be done using the 
 Viterbi algorithm, with a trellis that has $2$ states
 (Figure~\ref{fig:conv}). The points on each edge in the trellis corresponds to the number of observations which agree with the output of the transition between the states.

 \begin{thm}\label{thm:skip1} In the setting above,
 \[c(p) = \frac{2/3}{1-e^{-D(\theta)}} \] is the optimal coverage.  \end{thm}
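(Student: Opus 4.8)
The plan is to analyze the ML (Viterbi) decoder on the two-state trellis of Figure~\ref{fig:conv} through its \emph{error events}, as in the classical union-bound analysis of convolutional codes. An error event is a stretch of the trellis on which the decoded path $\hat L$ leaves the true path $L$ and later remerges, and a reconstruction error occurs if and only if some error event occurs. My first step is to argue that the pairwise probability of preferring a competing path to the true one depends only on the number $w$ of output coordinates (outputs of $g(z)$, each repeated and sent through the BSC) on which the two paths disagree. Since every observation is an independent BSC$(\theta)$ copy of its coordinate with equal-magnitude log-likelihood, the likelihood comparison between the two paths reduces to a majority vote over the pooled observations of those $w$ coordinates. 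By Poisson splitting each coordinate is seen $\text{Poiss}(\tfrac{c(p)}{2}\log n)$ times, so the pooled count is $N\sim\text{Poiss}(w\tfrac{c(p)}{2}\log n)$; reusing the identity $\mathbb{E}[e^{-N D(\theta)}]=e^{-\lambda(1-e^{-D(\theta)})}$ implicit in Section~\ref{sec:adjacent} then yields a pairwise error probability of order $n^{-w\frac{c(p)}{2}(1-e^{-D(\theta)})}$, with a matching polylogarithmic lower bound coming from \eqref{eq:converse1}.

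The second step is to identify the minimum-weight error events. Because $g(z)=[\,1,\ 1+z\,]$ is systematic in its first coordinate, flipping a block $L_i,\dots,L_{i+m}$ changes the systematic outputs at the $m+1$ flipped stages and the parity outputs only at the two stages bordering the block, for a total weight $m+3$; this is minimized by a single flip ($m=0$), giving free distance $w=3$. A union bound over the $\Theta(n)$ starting positions and over all event lengths then gives reconstruction-error probability at most a constant times $n\sum_{w\ge 3}A_w\, n^{-w\frac{c(p)}{2}(1-e^{-D(\theta)})}$, where $A_w$ is the finite number of weight-$w$ events per stage obtained from the code's transfer function. The $w=3$ term dominates and forces $\tfrac{3c(p)}{2}(1-e^{-D(\theta)})>1$, i.e.\ $c(p)>\tfrac{2/3}{1-e^{-D(\theta)}}$. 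Because the code cannot be terminated, the two end stages admit weight-$2$ events (flipping $L_1$ or $L_{n-1}$), but there are only $O(1)$ of these and they contribute $O\!\left(n^{-c(p)(1-e^{-D(\theta)})}\right)\to 0$, so they do not set the threshold.

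For the converse I would use a genie argument. Revealing all $L_j$ with $j\ne i$ to the decoder turns the three coordinates touching $L_i$---the systematic output at stage $i$ and the two parity outputs at stages $i-1$ and $i$---into direct noisy copies of $L_i$, giving $N\sim\text{Poiss}(\tfrac{3c(p)}{2}\log n)$ repetitions. The lower bound of \eqref{eq:converse1} then makes the genie's error probability at least $n^{-\frac{3c(p)}{2}(1-e^{-D(\theta)})}$ up to a polylogarithmic factor, which exceeds $1/n$ when $c(p)<\tfrac{2/3}{1-e^{-D(\theta)}}$. Choosing interior indices $i$ spaced at least three apart makes these genie events depend on disjoint coordinates and hence independent; with $\Theta(n)$ such events each failing with probability $\gg 1/n$, at least one fails with high probability. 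Since the genie only helps, the true ML decoder fails as well, establishing optimality of $c(p)=\tfrac{2/3}{1-e^{-D(\theta)}}$.

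The main obstacle is the achievability union bound: unlike the repetition-code case of Section~\ref{sec:adjacent}, here infinitely many error events of unbounded length must be controlled simultaneously, and they overlap in their output coordinates. The key is to show that the weight enumerator $A_w$ of $g(z)=[\,1,\ 1+z\,]$ grows slowly enough---it is captured by a rational transfer function with all events of weight $\ge 3$---that the sum $\sum_{w\ge 3}A_w\, n^{-w\frac{c(p)}{2}(1-e^{-D(\theta)})}$ is dominated by its first term; the overlap is harmless since the union bound needs no independence. A secondary technical point is verifying that the untermination at the two boundaries produces only $O(1)$ low-weight events, so that the interior weight-$3$ events, not the boundary weight-$2$ events, determine the optimal coverage.
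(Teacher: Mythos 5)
Your proposal is correct and follows essentially the same route as the paper: pairwise error probabilities of order $n^{-w\frac{c(p)}{2}(1-e^{-D(\theta)})}$ from pooled Poisson observations on the differing coordinates, the free distance $3$ of $g(z)=[\,1,\ 1+z\,]$ driving a union bound over error events (with the $O(1)$ unterminated boundary events handled separately) for achievability, and $\Theta(n)$ independent weight-$3$ events forcing failure for the converse. Your genie-aided converse is a repackaging of the paper's argument that some weight-$3$ detour kills the true path --- revealing $L_{j}$, $j\neq i$, and erring on $L_i$ is exactly the event that flipping $L_i$ increases the likelihood, which already implies ML/MAP failure --- so the two proofs coincide in substance.
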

  
 \begin{proof} 
 We will see that the error performance is determined by the {\it free
 distance} of the convolutional code, which is the minimum weight of any non-zero codeword.
 First note that this particular code is not catastrophic that is, no
 infinite-weight input produces a finite-weight output, as
 the GCD of the generator polynomials is $1$.  Thus to calculate the
 free distance, we can restrict ourselves to input message polynomials
 with finite degree.  In particular, note that any input bits
 with weight more than $1$ will have give rise to an output
 of weight at least $4$.  Any input with weight $1$ gives us output
 of weight $3$. Thus we conclude that the free distance of the code is $3$.
 
  Following the analysis in \cite{Vit}, consider the condition where
  the parties are $L_1=L_2=\cdots=L_{n-1}=0$. Next note that any state transition apart from the $0 \rightarrow 0$ transition adds an output of weight at least $1$. Thus we note that the amount of weight accumulated by any path that diverged from the all $0$ path $\ell$ stages ago accumulates weight at least $\ell$. 
  
  Let $R_n(p)$ be the average number of reads that cover a position (that is an output of a convolutional code). We have that,
  \begin{align*}
   R_n(p) = \frac{c(p)}{2} \log n
  \end{align*}

  We further note that since the reads are all independently
  corrupted, the number of reads supporting the all zero path as
  compared with some weight $d$ path depends only on the positions
  where the all zero path and the second path differ. We note that the
  reads supporting the weight $d$ path at these points of difference
  from the all zero path is $X_d \sim \text{Poiss}(d \theta
  R_n(p))$ (because the sum of independent Poisson random
  variables is a Poisson with rate equal to the sum of rates.). The
  points accumulated by the all zero path on these points is $X_0 \sim
  \text{Poiss}(d (1-\theta)R_n(p))$. Further we have that $X_0{\perp\!\!\!\perp} X_d $, for all $d>1$. Thus we have that
  the probability that a path of weight $d$ is preferred over the all
  zero path with probability $P(X_d > X_0)$, which can be bounded as
  (See Appendix \ref{app:poisson_race})


  \begin{align}
   P(X_d > X_0) &\geq \frac{1}{C_1 \log^2 n}\left( \frac{1}{n} \right)^{d\gamma(p)}- \frac{C_2}{n^{2c(p)}  \log n} - \frac{C_3}{n^{2c(p)}  \log^2 n},\\
   P(X_d \geq X_0) &\leq \left( \frac{1}{n} \right)^{d\gamma(p)}, 
  \end{align}
where $\gamma(p) =\frac{c(p)(1-e^{-D(\theta)})}{2}$.

  We first note that the all $0$ path should not be killed by the weight $3$ path that diverged from the all zero path $3$ stages before the current stage. We note that there are $\frac{n}{3}$ disjoint events. Thus we have that,
  \begin{eqnarray}
   P(\text{error}) &\geq& \frac{n}{3} P(X_3 > X_0), \nonumber \\
   &=& \Theta \left(\frac{n}{\log^2 n} \left(\frac{1}{n}\right)^{\frac{3c(p)(1-e^{-D(\theta)})}{2}}\right), \nonumber \\
   &\rightarrow& 1, \text { if } c(p) < \frac{2/3}{1-e^{-D(\theta)}},
  \end{eqnarray}
 thus giving us that 
 \begin{equation}
  c(p) \geq \frac{2/3}{1-e^{-D(\theta)}}
 \end{equation}
 is necessary.
  
  Next, we prove sufficiency. We say that the all-zero path is killed at stage $i$ 
  if some other path is preferred to the all $0$ path at the  $0$ node in the trellis 
  at stage $i$. At the last stage, this includes, the event that the path terminating at the $1$ node of the trellis 
  has accumulated more weight than the path terminating at the $0$ node. 
  
  Let $p_i$
  be  the probability of the all-zero path being killed at stage $i$ of
  the trellis.  We will bound $p_1+\dotsc,+p_n$. 

  First consider the case where the all zero path is killed in the
  first $3$ stages.  As there are only $8$ paths in the trellis at this stage, the probability of this occurring is 
  \begin{equation}
   p_1+p_2+p_3 \leq 8 \frac{1}{n^{\gamma(p)}} \rightarrow 0 \text{ as } n \rightarrow \infty.
  \end{equation}

  Next consider the probability of the all zero codeword being killed
  at the last stage. Note that each stage has at most $2$ surviving paths, and any path of weight $w$ has to have diverged from the all-zero path at most $w$ stages ago, there are at most $2^w$ paths of weight $w$ competing with the all zero codeword at any stage. Thus union bounding this probability is upper bounded as,
  \begin{equation}
   p_n\leq \sum_{i=1}^{\infty} \frac{2^i}{n^{\gamma(p)i}} \leq \frac{4}{n^{\gamma(p)}},
  \end{equation}
  for large enough $n$, which also goes to $0$ as $n\rightarrow \infty$.

  Finally consider the case of the all zero codeword being killed in
  the mid section of the trellis.  Note that after the first $3$ stages any path that did not diverge from the all zero path has hamming weight at least $3$. Further note that from the definition of free distance, any path that diverged from the all zero path and is competing with the all zero path has weight at least $3$. Thus the probability of the all zero codeword being killed at any stage is upper bounded by,
  \begin{equation}
   p_i \leq \sum_{i=3}^{\infty} \frac{2^i}{n^{\gamma(p)i}} = \Theta(\frac{1}{n^{3\gamma(p)}})
  \end{equation}
  
  The total probability of error thus is asymptotically
  \begin{eqnarray}
   P(\text{error}) \leq \lim_{n\rightarrow \infty }  \sum_{i=1}^n p_i \rightarrow 0, \ \text{ if } c(p) > \frac{2/3}{1-e^{D(\theta)}}.
  \end{eqnarray}
  \end{proof}
  
  \begin{note}
   This result actually shows that asymptotically no algorithm can perfectly reconstruct  even in this non-Bayesian setting when $c(p)<  \frac{2/3}{1-e^{-D(\theta)}}$ for all assignments of $L_1,\cdots L_{n-1}$. 
   
   To see this we first note that by the symmetry in the problem and the ML algorithm, the probability of error for every assignment of $L_1,\cdots L_{n-1}$ is the same for the ML algorithm. Thus if any algorithm $\mathcal{A}$ succeeds to perfectly reconstruct when $c(p)<  \frac{2/3}{1-e^{-D(\theta)}}$, then the probability of error of this algorithm on every assignment of $L_1,\cdots L_{n-1}$, would have to be asymptotically lower than the probability of error of the ML algorithm on that assignment.
   
   Further, we note that if each $L_i\sim$Bern$(0.5)$, then the ML algorithm would be the MAP algorithm, with the minimum probability of error. However this leads to a contradiction, because the probability of error of $\mathcal{A}$ in this Bayesian setting would then be asymptotically lower than that of the MAP algorithm.
  \end{note}

  \subsection{ $W$ is uniform over $1, \dotsc ,w$ for $w\ge 3$.}
 We can observe adjacent, skip$-1$, skip$-2, \cdots$, skip$-(w-1)$, parities, each being equally likely. Here let $X^{(\ell)}_i\sim \text{Poiss}(\frac{c(p)}{w} \log n)$ be the number of observation of skip$-m$ parities, $(i,i+m+1)$. With notation as before we have that, each observed parity can be represented as,
  \begin{equation}
   Z_{i,j}^{(\ell)}=\bigoplus_{k=0}^{\ell} L_{i+k} \oplus \epsilon_{i,j}^{(\ell)},
  \end{equation}
  for $ \ 1 \leq i \leq n-\ell, \ 1 \leq j \leq X_i^{(\ell)}, 1 \le \ell \le w-1,$ where $\epsilon_{i,j}^{(\ell)}$ are Bern$(\theta)$ random variables independent of each other.
  
  We note that these can be represented by  a rate $\frac{1}{w}$ convolutional code with polynomial generator matrix
  \begin{equation}
   g_w(z) = \left[\begin{array}{c c c c}
           1 & 1 +z & \cdots & 1 + z + \cdots +z^{w-1}
          \end{array}\right].
  \end{equation}
  We further note that, the trellis corresponding to this code will have $2^{w-1}$ states.

  \begin{lem} \label{lem:free_dist}
   The free distance of the code whose polynomial generator matrix is given by $g_w(z)$ is $2w$, for $w\geq 3$.
  \end{lem}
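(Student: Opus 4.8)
The plan is to establish the two inequalities $d_{\mathrm{free}} \le 2w$ and $d_{\mathrm{free}} \ge 2w$ separately, working throughout in the polynomial ring $\mathrm{GF}(2)[z]$. The observation that organizes everything is that the leading entry of $g_w(z)$ is the constant $1$, so the first output stream of any codeword equals the input polynomial $u(z)$ itself. Hence any finite-weight codeword must arise from a finite-support (polynomial) input --- this is the non-catastrophic property, which the GCD-equals-$1$ condition also confirms --- and I may restrict attention to nonzero polynomials $u(z)$, writing the $\ell$-th output component as $c_\ell(z) = u(z)\,(1 + z + \cdots + z^{\ell-1})$ for $\ell = 1, \dots, w$. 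Since shifting the input in time shifts every component identically and preserves all Hamming weights, I may normalize $u$ so that its lowest-degree term is the constant term, i.e. $u(0) = 1$, and set $b = \deg u \ge 0$.

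For the upper bound I would simply exhibit a codeword of weight $2w$. Taking $u(z) = 1 + z$ and using that $(1+z)(1 + z + \cdots + z^{\ell-1}) = 1 + z^{\ell}$ over $\mathrm{GF}(2)$, each of the $w$ output components has Hamming weight exactly $2$, so the codeword has total weight $2w$ and therefore $d_{\mathrm{free}} \le 2w$.

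The lower bound is the crux, and I would split it according to the number of nonzero terms of $u$. If $u$ has a single nonzero term, the normalization forces $u = 1$, the components become $c_\ell = 1 + z + \cdots + z^{\ell-1}$ of weight $\ell$, and the total weight is $1 + 2 + \cdots + w = \tfrac{w(w+1)}{2}$, which is $\ge 2w$ precisely when $w \ge 3$; this is exactly where the hypothesis $w \ge 3$ enters, and it also explains why the $w = 2$ code of Theorem~\ref{thm:skip1} has free distance $3 = 2w - 1$ rather than $2w$. If instead $u$ has at least two nonzero terms, then $b \ge 1$, and the key trick is to inspect the extreme coefficients of each component: the constant term of $c_\ell$ is $u(0) = 1$, while the top coefficient, at degree $b + \ell - 1$, is the product of the leading coefficients of $u$ and of $1 + z + \cdots + z^{\ell-1}$ and hence equals $1$. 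Since $b + \ell - 1 \ge b \ge 1 > 0$, these two nonzero coefficients lie at distinct degrees, so every component satisfies $\mathrm{wt}(c_\ell) \ge 2$ and the total weight is $\ge 2w$.

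The main obstacle I anticipate is not the arithmetic but pinning down the lower bound cleanly; the leading/trailing-coefficient argument is what makes it work without enumerating paths in the $2^{w-1}$-state trellis. The only points needing care are justifying the time-shift normalization (that shifting $u$ costs no weight) and verifying that the single-impulse codeword, which is the natural minimum-weight candidate, is genuinely beaten by the weight-$2w$ codeword exactly in the regime $w \ge 3$.
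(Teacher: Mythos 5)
Your proof is correct and takes essentially the same route as the paper's: both restrict to polynomial inputs via non-catastrophicity, split on whether the input is a single monomial (giving weight $1+2+\cdots+w = \tfrac{w(w+1)}{2} \ge 2w$ exactly when $w \ge 3$) or has at least two terms (giving weight $\ge 2$ on each of the $w$ streams), and exhibit $u(z) = 1+z$ as the weight-$2w$ witness. The only difference is that your trailing/leading-coefficient argument rigorously justifies the per-stream weight-$\ge 2$ claim, which the paper simply asserts.
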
 
  \begin{proof}
 We note that, this code is not catastrophic, as the GCD of the polynomials is $1$. Further, when $w\geq3$, we note that any monomial input will have output weight $\frac{w(w+1)}{2}$, which for $w\geq3$ is greater than $2w$. Further, we note that for any input with more than $1$ monomial, we will each term in the polynomial generator matrix will have outputs of weight at least $2$. Thus we have that the free distance $\geq 2w$. We note that the input $1+z$ gives us an output of weight $2w$. Hence, we have that the free distance of this code is $2w$.  
  \end{proof}

  Following the exact same procedure as before, we have that,
  \begin{thm}
  For the setting above, when $w\ge 3$,
   \[c(p) = \frac{w}{d_{\text{free}}(1-e^{-D(\theta)})} = \frac{1/2}{1-e^{-D(\theta)}},\]
   is the optimal coverage, where  $d_{\text{free}}$ is the free distance of the convolutional code, with polynomial generator matrix $g_w(z)$.
  \end{thm}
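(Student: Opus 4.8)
The plan is to follow the proof of Theorem~\ref{thm:skip1} almost verbatim, substituting the two-state trellis and free distance $3$ there by the $2^{w-1}$-state trellis of $g_w(z)$ and its free distance $d_{\mathrm{free}}=2w$ from Lemma~\ref{lem:free_dist}. First I would set up the Poisson race. Since $W$ is uniform on $\{1,\dotsc,w\}$, each of the $w$ output streams of the code is observed on average $R_n(p)=\frac{c(p)}{w}\log n$ times. For any path that disagrees with the all-zero path in $d$ output positions, the number of observations supporting that path at the disagreements is $X_d\sim\mathrm{Poiss}(d\theta R_n(p))$ and the number supporting the all-zero path there is $X_0\sim\mathrm{Poiss}(d(1-\theta)R_n(p))$, independent of $X_d$. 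Setting $\gamma(p)=\frac{c(p)(1-e^{-D(\theta)})}{w}$, the same Poisson-race estimates used before (Appendix~\ref{app:poisson_race}) give $P(X_d\ge X_0)\le (1/n)^{d\gamma(p)}$ together with a matching lower bound on $P(X_d>X_0)$ up to the usual $C_1,C_2,C_3$ factors.

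For the converse, I would note that by Lemma~\ref{lem:free_dist} a minimum-weight detour has weight $2w$ and, being produced e.g. by the finite input $1+z$, spans only $O(w)$ trellis stages; hence there are $\Theta(n)$ pairwise disjoint minimum-weight error events. Consequently
\[
 P(\text{error}) \ge \Theta(n)\,P(X_{2w}>X_0) = \Theta\!\left(\tfrac{n}{\log^2 n}\right)(1/n)^{2w\gamma(p)} \longrightarrow 1
\]
exactly when $2w\gamma(p)<1$, i.e. when $c(p)<\frac{w}{2w(1-e^{-D(\theta)})}=\frac{1/2}{1-e^{-D(\theta)}}$, which establishes necessity.

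For achievability I would bound $\sum_i p_i$, where $p_i$ is the probability that the all-zero path is killed at stage $i$. The first $O(w)$ stages and the terminal stage are handled exactly as in Theorem~\ref{thm:skip1} and contribute $o(1)$. For the bulk of the trellis I would union bound over error events by weight: letting $A_d$ denote the number of weight-$d$ detours from the all-zero path,
\[
 \sum_i p_i \le n\sum_{d\ge 2w} A_d\,(1/n)^{d\gamma(p)} .
\]
The one genuinely new ingredient, relative to the two-state case, is the bound $A_d\le\beta^d$ for some constant $\beta$. This is where non-catastrophicity enters: since the GCD of the entries of $g_w(z)$ is $1$ (Lemma~\ref{lem:free_dist}), the all-zero self-loop is the only zero-output-weight cycle in the state diagram, so the detour-weight transfer function is rational with a positive radius of convergence and its coefficients $A_d$ grow at most geometrically. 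Plugging $A_d\le\beta^d$ into the display gives $\sum_i p_i \le n\sum_{d\ge 2w}(\beta (1/n)^{\gamma(p)})^d = \Theta(n^{1-2w\gamma(p)})\to 0$ whenever $2w\gamma(p)>1$, i.e. $c(p)>\frac{1/2}{1-e^{-D(\theta)}}$.

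The main obstacle is precisely this geometric bound $A_d\le\beta^d$. In the skip-$1$ case $w=2$ the two-state trellis made the crude estimate $A_d\le 2^d$ immediate, but for $w\ge 3$ the trellis has $2^{w-1}$ states, so I must use the structural fact that a non-catastrophic code has no nontrivial zero-weight cycle to conclude that weight grows linearly along any detour and hence that the number of weight-$d$ detours is only exponential in $d$. Once this is in hand, the converse and achievability thresholds coincide at $c(p)=\frac{w}{d_{\mathrm{free}}(1-e^{-D(\theta)})}=\frac{1/2}{1-e^{-D(\theta)}}$, as claimed.
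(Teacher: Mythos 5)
Your proposal is correct and takes essentially the same approach as the paper: the paper's own proof of this theorem is literally ``following the exact same procedure as before,'' i.e.\ repeat the proof of Theorem~\ref{thm:skip1} with the $2^{w-1}$-state trellis, per-stream observation rate $\frac{c(p)}{w}\log n$, and $d_{\text{free}}=2w$ in place of $3$, which is precisely what you do. The one ingredient you flag as genuinely new --- the geometric bound on the number of weight-$d$ detours, obtained from non-catastrophicity because no cycle off the zero state has zero output weight, so weight grows linearly along any detour --- is exactly the counting argument the paper itself deploys in Appendix~\ref{app:gen_case} for the general (non-uniform $W$) case, so your filled-in details match the paper's intended procedure.
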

  

\subsection{ $W$ is non-uniform}
  Next we consider the case where, all parities are not observed in
  the same proportions.  In particular, the separation between the two reads in a mate-pair measured in terms of number of SNPs between them is a random variable
  $W$, taking integral values between $1$ and $w$, with probabilities $p_1,\dotsc,p_w$.
  That is, the number of observations of skip$-(\ell-1)$ parities $(i,i+\ell)$ is given by $X_i^{(\ell)} \sim \text{Poiss}({c(p)} p_i \log n)$, where $\sum_{i=1}^w p_i=1$.   
  
 We further assume that the GCD of the generator polynomials corresponding to the support of $p_1,\cdots, p_w$ is $1$. Let $\mathbf{p}=(p_1,p_2,\cdots,p_w)$.
 
 To tackle this case we first for a general rate $\frac{1}{w}$ convolutional code $\mathcal{C}$, with $w$ output streams, we define the averaged distance of two  a codewords $v_1, v_2 \in \mathcal{C}$, as
  \begin{equation}
   \tilde{d}_{\mathbf{p}}(v_1,v_2) = \sum_{i=1}^{w} p_i \text{wt}(v_1^{(i)}-v_2^{(i)}),
  \end{equation}
  where $v_j^{(i)}$ is the codeword $v_j$ in the $i$-th stream, and  $\text{wt}(v_1^{(i)}-v_2^{(i)})$ is the hamming distance between $v_1$ and $v_2$ in the $i$-th stream. Further let $\tilde{d}_{\mathbf{p}}(v):= \tilde{d}_{\mathbf{p}}(v,\mathbf{0})$, be referred to as the averaged weight of a codeword $v \in \mathcal{C}$

  For any convolutional code $\mathcal{C}$, define the averaged free distance,
  \begin{equation}
   \tilde{d}_{\text{free}}(\mathbf{p})= \min_{v_1,v_2 \in \mathcal{C}} \tilde{d}_{\mathbf{p}}(v_1,v_2)= \min_{v \in \mathcal{C}- \{\mathbf{0}\}} \tilde{d}_{\mathbf{p}}(v),
  \end{equation}
  where the second equality follows from the linearity of $\mathcal{C}$, and $\mathbf{0}$ is the all zero codeword. Henceforth, by abuse of notation, we shall represent $\tilde{d}_{\text{free}}(\mathbf{p})$ by $\tilde{d}_{\text{free}}$.

  \begin{lem}
   If the GCD of the generator polynomials corresponding to the support of $\mathbf{p}=(p_1,p_2, \cdots, p_w)$ is $1$ ($i.e.$ the code is not catastrophic), for the family of codes under consideration (with polynomial generator matrix $g_w(z), w \geq 2$)), we have that,
 \begin{equation*}
  \tilde{d}_{\text{free}}= \min (\sum_{i=1}^w i p_i, 2 ) =  \min
  \{\mathbb{E}(W), 2\}, 
 \end{equation*}
 where 
 $\mathbb{E}(W) =\sum_{i=1}^w i p_i$.
  \end{lem}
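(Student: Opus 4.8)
The plan is to minimize the averaged weight directly in terms of the input message polynomial that generates each codeword. Write $g_\ell(z)=1+z+\cdots+z^{\ell-1}$ for the $\ell$-th generator (so $\text{wt}(g_\ell)=\ell$ and $W=\ell$ with probability $p_\ell$, consistent with $\mathbb{E}[W]=\sum_\ell \ell\,p_\ell$). A message polynomial $u(z)$ produces the codeword whose $\ell$-th stream is $u(z)g_\ell(z)$, so if I write $\tilde d_{\mathbf p}(u)$ for the averaged weight of that codeword,
\[
\tilde d_{\mathbf p}(u) = \sum_{\ell=1}^{w} p_\ell\,\text{wt}\bigl(u(z)g_\ell(z)\bigr).
\]
Since the code is not catastrophic over the support of $\mathbf p$, every codeword of finite averaged weight is generated by a finite-support (polynomial) input, so it suffices to minimize the display above over nonzero polynomials $u(z)$; this reduction is exactly where the GCD hypothesis enters.

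For the upper bound I would exhibit two inputs. The single monomial $u(z)=1$ gives $\tilde d_{\mathbf p}(1)=\sum_\ell p_\ell\,\text{wt}(g_\ell)=\sum_\ell \ell\, p_\ell=\mathbb{E}[W]$. The input $u(z)=1+z$ telescopes over $\mathrm{GF}(2)$: one has $(1+z)g_\ell(z)=1+z^\ell$, which has weight $2$ for every $\ell\ge 1$, so $\tilde d_{\mathbf p}(1+z)=\sum_\ell 2p_\ell=2$. Hence $\tilde d_{\text{free}}\le\min\{\mathbb{E}[W],2\}$.

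For the matching lower bound I would split on $\text{wt}(u)$. The key elementary claim is that for every nonzero $u(z)$ and every $\ell\ge 2$ one has $\text{wt}\bigl(u(z)g_\ell(z)\bigr)\ge 2$: if $z^a$ and $z^b$ (with $a\le b$) are the lowest and highest nonzero terms of $u$, then since $g_\ell$ has constant term $1$ and leading term $z^{\ell-1}$, no cancellation can occur at the extreme degrees, so $u g_\ell$ carries nonvanishing terms at $z^a$ and at $z^{b+\ell-1}$, and these are distinct because $b+\ell-1\ge b+1>a$ when $\ell\ge 2$. Consequently, if $\text{wt}(u)\ge 2$ then $\text{wt}(u g_1)=\text{wt}(u)\ge 2$ and $\text{wt}(u g_\ell)\ge 2$ for all $\ell\ge 2$, giving $\tilde d_{\mathbf p}(u)\ge\sum_\ell 2p_\ell=2$; whereas if $\text{wt}(u)=1$ then $u$ is a monomial and $\tilde d_{\mathbf p}(u)=\mathbb{E}[W]$. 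In either case $\tilde d_{\mathbf p}(u)\ge\min\{\mathbb{E}[W],2\}$, which together with the upper bound proves the lemma.

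The step I expect to be the only real obstacle is the lower-bound claim $\text{wt}(u g_\ell)\ge 2$ for $\ell\ge 2$; everything else is bookkeeping. It is clean once one notes that the product always retains its lowest and highest degree terms, which are forced to be distinct. A secondary point to state carefully is the reduction to polynomial inputs, which is precisely where the non-catastrophic hypothesis is used: it guarantees that no infinite-weight input can yield a finite averaged weight, so the minimization over $\mathcal{C}-\{\mathbf 0\}$ can be restricted to finite inputs.
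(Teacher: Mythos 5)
Your proof is correct and follows essentially the same route as the paper's: restrict to polynomial inputs via the non-catastrophic hypothesis, split on whether the input is a monomial (averaged weight $\mathbb{E}[W]$) or has two or more terms (every stream has weight at least $2$, with $u(z)=1+z$ achieving averaged weight exactly $2$). The only difference is that you supply an explicit justification (survival of the lowest- and highest-degree terms, and the identity $(1+z)g_\ell(z)=1+z^\ell$) for the stream-weight claims that the paper merely asserts, which is a welcome but minor refinement.
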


  \begin{proof}
  As the GCD of the polynomials corresponding to the support of $\mathbf{p}$ is $1$, we have that the no input message polynomial of infinite  weight can have an output codeword of finite averaged weight. 
  
  Further we note that for every input of more than $1$ monomial the output on every stream will have at least weight $2$, thus giving us that the averaged weight of any such message symbol is at least $2$. Further, we note that the input monomial $1+z$ will lead to a codeword with weight exactly $2$ on each stream, giving us a codeword of averaged weight $2$.  
  
  Next, we note that any input of $1$ monomial will give rise a codeword of  weight $i$ on the $i$-th stream and hence gives a codeword with averaged weight $ \sum_{i=1}^w i p_i$. 
  
  Thus we have that 
  \[\tilde{d}_{\text{free}}= \min (\sum_{i=1}^w i p_i, 2 ).\]
  \end{proof}

%
 
 \begin{thm} \label{thm:gen_case}
  In this case, 
  \begin{equation}
   c(p) = \frac{1}{\tilde{d}_{\text{free}}(1-e^{-D(\theta)})}=\frac{1}{\min \{ \mathbb{E}(W), 2 \} (1-e^{-D(\theta)})},
  \end{equation}
  is the optimal coverage.
  \end{thm}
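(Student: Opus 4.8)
The plan is to run the convolutional-coding argument of Theorem~\ref{thm:skip1} almost verbatim, replacing ordinary Hamming weight everywhere by the averaged weight $\tilde{d}_{\mathbf{p}}(\cdot)$. The assembly problem is again ML decoding of the rate-$\tfrac1w$ code with generator $g_w(z)$ on its $2^{w-1}$-state trellis; the only structural change is that stream $\ell$ is now observed with Poisson rate $c(p)p_\ell\log n$ per position instead of a common rate. I would first redo the Poisson race against the true all-zero path. If a competing codeword $v$ differs from $\mathbf{0}$ in $w_\ell$ positions of stream $\ell$, then Poisson splitting gives that the votes collected on the points of difference are independent, with $X_v\sim\text{Poiss}\big(\theta\,c(p)\log n\,\sum_\ell p_\ell w_\ell\big)$ for $v$ and $X_0\sim\text{Poiss}\big((1-\theta)\,c(p)\log n\,\sum_\ell p_\ell w_\ell\big)$ for $\mathbf{0}$. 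Since $\sum_\ell p_\ell w_\ell=\tilde{d}_{\mathbf{p}}(v)$, the race depends on $v$ only through its averaged weight, and the Appendix~\ref{app:poisson_race} estimate yields
\[ P(X_v\ge X_0)\ \le\ \Big(\tfrac1n\Big)^{c(p)\,(1-e^{-D(\theta)})\,\tilde{d}_{\mathbf{p}}(v)}, \]
together with a matching lower bound up to a $\mathrm{poly}(\log n)$ factor. This is precisely the earlier bound with $d$ replaced by $\tilde{d}_{\mathbf{p}}(v)$, which is exactly why the averaged distance was introduced.

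For necessity I would use the lemma just proved, which identifies $\tilde{d}_{\text{free}}=\min\{\mathbb{E}(W),2\}$ as the minimum averaged weight of a nonzero codeword, attained by a constant-span error event (the single-monomial input for $\mathbb{E}(W)$, the input $1+z$ for $2$). Exactly as in Theorem~\ref{thm:skip1}, I would isolate $\Theta(n)$ disjoint positions at which such a minimum-averaged-weight path competes with the all-zero path, and bound the error probability below by $\Theta\!\big(n/\log^2 n\big)\,n^{-c(p)(1-e^{-D(\theta)})\tilde{d}_{\text{free}}}$ using the lower bound on the race. This diverges, forcing the error probability to $1$, exactly when $c(p)<\frac{1}{\tilde{d}_{\text{free}}(1-e^{-D(\theta)})}$, which proves necessity.

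Sufficiency is the delicate part. I would bound $\sum_i p_i$, where $p_i$ is the probability that the all-zero path is killed at stage $i$, by a union bound over competing error events, splitting the stages into the first $O(w)$, the last, and the middle as before. Applying the race estimate to each event gives $p_i\le\sum_{v}n^{-c(p)(1-e^{-D(\theta)})\tilde{d}_{\mathbf{p}}(v)}$. Since every nonzero codeword has $\tilde{d}_{\mathbf{p}}(v)\ge\tilde{d}_{\text{free}}$, the free-distance events supply the dominant term $n^{-c(p)(1-e^{-D(\theta)})\tilde{d}_{\text{free}}}$; to show the rest is only a lower-order correction I would group the remaining events by the number of stages $L$ they span. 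The $2^{w-1}$-state binary-input trellis has at most $2^{L}$ events of span $L$, and the crux is that averaged weight grows at least linearly in span: because the GCD hypothesis keeps the code non-catastrophic even after deleting the unobserved streams, the Hamming weight carried on the supported streams grows linearly in $L$, and multiplying by $\min_{\ell:\,p_\ell>0}p_\ell>0$ gives $\tilde{d}_{\mathbf{p}}(v)\ge\alpha L$ for some $\alpha>0$. Hence the contribution of span-$L$ events is at most $2^{L}n^{-c(p)(1-e^{-D(\theta)})\alpha L}$, a convergent geometric tail for large $n$, so $p_i\le C\,n^{-c(p)(1-e^{-D(\theta)})\tilde{d}_{\text{free}}}$ for a constant $C$; summing over the $n$ stages yields $P(\text{error})\le\Theta(n)\,n^{-c(p)(1-e^{-D(\theta)})\tilde{d}_{\text{free}}}\to 0$ whenever $c(p)>\frac{1}{\tilde{d}_{\text{free}}(1-e^{-D(\theta)})}$.

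The step I expect to be the main obstacle is precisely this averaged-weight enumerator estimate. The positivity $\min_{\ell:\,p_\ell>0}p_\ell>0$ is essential: without a uniform lower bound on the supported transition probabilities one cannot exclude long error events of vanishingly small averaged weight, and the geometric tail above would fail to converge. This is where the hypothesis that the GCD over $\mathrm{supp}(\mathbf{p})$ equals $1$ does the real work, guaranteeing that deleting the unobserved streams still leaves a non-catastrophic code. Finally, the converse holds in the stronger non-Bayesian sense of the preceding remark: the ML rule's error probability is the same for every assignment of $L_1,\dots,L_{n-1}$, so an algorithm beating the threshold on all assignments would beat the MAP rule under the $\mathrm{Bern}(1/2)$ prior, a contradiction.
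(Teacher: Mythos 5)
Your proposal is correct and follows essentially the same route as the paper's proof: the Poisson race indexed by averaged weight, the converse via $\Theta(n)$ independent minimum-averaged-weight error events, and the achievability union bound in which the non-catastrophic (GCD) hypothesis forces averaged weight to grow linearly with the span of an error event (the paper phrases this as gaining at least $\tau=\min_{\ell: p_\ell>0}p_\ell$ in every $2^{w-1}+1$ stages), yielding a geometric tail dominated by the $\tilde{d}_{\text{free}}$ term. Your identification of where the support/GCD hypothesis does the real work matches the paper exactly.
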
 
 

 \begin{proof}
  This is relegated to Appendix \ref{app:gen_case}.
 \end{proof}
 \vspace{0.1 in}
 
 \begin{note}

   On the ring of polynomials over the binary field $\mathbb{F}_2[z]$, let,
   \begin{equation}
    v_{r}(z)= 1+z+z^2+\cdots+z^{r-1}.
   \end{equation}
   It is easy to see that if $r|s$, then $v_{r}(z)|v_{s}(z)$. For $1<r\leq s \leq 50$, one can check that if $r \nmid s$, then $v_r(z)\nmid v_s(z)$. We note that all polynomials in the polynomial generator matrix of the code considered here are of the form $v_r(z)$.

   Thus we have that, for $1 \le w\le 50$, the  GCD of generator polynomials corresponding to the support of $p_1,\cdots, p_w$ is not equal to 1, only occurs  when Support$(p_1,p_2,\cdots,p_w)$ is a set of the form $\{ 1 < j \leq w: j=ki, \text{ for some } k \in \mathbb{N}\}$, for some integer $i>1$. This corresponds to the case when our observations have a cyclic structure, in which case reconstruction is impossible.
    \end{note}
   
    \section{Non uniform SNP coverage}
    
 In all the above results, we have assumed that the number of reads covering every SNP position, with SNPs of its neighbourhood are identically distributed. In general, the genomic distance between adjacent SNP pairs
 will not be constant. There may be SNPs which do not have many SNPs in their near them on the genome, and hence may be covered by much fewer reads that cover multiple SNPs. We consider this in the setting of Section \ref{sec:adjacent}, where only adjacent SNPs are covered by reads, as an illustration of how this can be handled. We first start with a simple example.
 
  \begin{example}
 Suppose we only observe adjacent SNPs.
 If $\alpha$ fraction of adjacent SNPs were covered with probability $\frac{t}{n-1}$, $t<1$ and the rest were covered with equal probability, then 
 $ c(p)= \frac{1/t}{1-e^{-D(\theta)}}$ is the optimal coverage. This can be shown using calculations identical to those of \ref{sec:adjacent}.
 \end{example}

 Next, we consider a more general case.
 Suppose then that we only observe adjacent SNPs, and  that the probability of each read
 covering SNPs $(i,i+1)$ is not $\tfrac1{n-1}$ but instead 
 $\frac{q_i^{(n)}}{n-1}$ (where $\sum_{i=1}^{n-1}q_i^{(n)}=n-1$), and hence the number of observations of the
 $i$th parity is Poiss$(q_i^{(n)} c(p) \log(n))$.  Further assume that
 $q_i^{(n)}\in[C_1,C_2]$ for some constants $0<C_1\le1\le C_2$.   Fix
 $\delta >0$ and let $T_{\delta}=\lceil \frac{C_2 - C_1}{\delta} \rceil$.
 
 For any $\delta>0$, $1 \leq \ell \leq T_{\delta}$, let,
 \begin{eqnarray}
  S^{(n)}_{\ell,\delta} &:=& \{ i: C_1 + (\ell-1) \delta < q_i^{(n)} \leq C_1 \ell \delta\}, \\
  \epsilon_{\delta,\ell} &:=&  \inf_{\epsilon \in (0,1]}\{ |S^{(n)}_{\ell,\delta}| \in O(n^{\epsilon}) \},
 \end{eqnarray}
 and define $m$ and $k$ as
 \begin{equation}
  m:= \inf_{0 < \delta \leq C_2-C_1} \max_{1 \leq \ell \leq T_\delta} \frac{\epsilon_{\delta,\ell}}{C_1+(\ell-1)\delta}.
 \end{equation}
 and
 \begin{equation}
 k:= \sup_{0 < \delta \leq C_2-C_1} \max_{1 \leq \ell \leq T_\delta} \frac{\epsilon_{\delta,\ell}}{C_1+\ell\delta}.
\end{equation}
 
Then we have that 
\begin{eqnarray}
 P(\text{Perfect Recovery}) &=& \prod_{i=1}^{n-1}\left(1- \left(\frac{1}{n}\right)^{c(p)q_i^{(n)}(1-e^{-D(\theta)})}\right), \nonumber \\
 &\geq& \prod_{\ell=1}^{T_{\delta}}\left(1- \left(\frac{1}{n}\right)^{c(p) (C_1+ (\ell-1)\delta)(1-e^{-D(\theta)})}\right)^{|S_{\ell,\delta}^{(n)}|},\nonumber \\
 &\rightarrow& 1, \text{ if }c(p)> \frac{m}{1-e^{-D(0.5||2p(1-p))}}, \text{ as } n \rightarrow \infty.
\end{eqnarray}

This gives us that
 \begin{equation}
  c(p)> \frac{m}{1-e^{-D(\theta)}}
 \end{equation}
 is sufficient for perfect reconstruction, and  similar calculations to those of Section \ref{sec:adjacent} give us that,  
 
 \begin{equation}
  c(p)\geq \frac{k}{1-e^{-D(\theta)}}
 \end{equation}
 is necessary.

{
 \bibliographystyle{plain}
\bibliography{phasing}

\begin{thebibliography}{1}

\bibitem{Nextera}
{\em {Data Processing of Nextera Mate Pair Reads on Illumina Sequencing
  Platforms, }}, 2012.
\newblock Available at
  \url{http://www.illumina.com/documents/products/technotes/technote\_nextera\%
_matepair\_data\_processing.pdf}.

\bibitem{BroBro}
Sharon~R Browning and Brian~L Browning.
\newblock {Haplotype phasing: existing methods and new developments}.
\newblock {\em Nature Reviews Genetics}, 12(10):703--714, 2011.

\bibitem{CovTho}
Thomas~M Cover and Joy~A Thomas.
\newblock {\em {Elements of information theory}}.
\newblock John Wiley \& Sons, 2012.

\bibitem{HeEtal10}
Dan He, Arthur Choi, Knot Pipatsrisawat, Adnan Darwiche, and Eleazar Eskin.
\newblock {Optimal algorithms for haplotype assembly from whole-genome sequence
  data}.
\newblock {\em Bioinformatics}, 26(12):i183--i190, 2010.

\bibitem{dbSNP}
Adrienne Kitts, Lon Phan, Minghong Ward, and John~Bradley Holmes.
\newblock {\em {The Database of Short Genetic Variation (dbSNP)}}, April 2014.
\newblock Available at \url{http://www.ncbi.nlm.nih.gov/books/NBK174586/}.

\bibitem{SiVikVis}
Hongbo Si, Haris Vikalo, and Sriram Vishwanath.
\newblock {Haplotype Assembly: An Information Theoretic View}.
\newblock {\em arXiv preprint arXiv:1404.0097}, 2014.

\bibitem{Vit}
Andrew~J Viterbi.
\newblock {Convolutional codes and their performance in communication systems}.
\newblock {\em Communication Technology, IEEE Transactions on}, 19(5):751--772,
  1971.

\end{thebibliography}
}

  \appendix
  \subsection{Poisson Races} \label{app:poisson_race}
  If $X \sim$ Poiss$(\lambda)$ and $Y \sim $ Poiss$(\mu)$, $ \mu> \lambda$, $X {\perp\!\!\!\perp} Y$ then from a Chernoff bound, we have that,
 \begin{equation}
  P(X\leq Y) \leq \exp(- (\sqrt{\lambda} -\sqrt{\mu} )^2).
 \end{equation}
 
  Further noting, that $X+Y \sim$ Poiss $(\lambda +\mu)$. And $X|X+Y \sim$ Bin$(X+Y, \frac{\lambda}{\lambda+\mu})$. One can show that,
  \begin{equation}
   P(X>Y) > \frac{\exp(-(\sqrt{\mu} -\sqrt{\lambda})^2  )}{(\lambda + \mu)^2} - \frac{e^{-(\lambda + \mu)}}{2\sqrt{\lambda \mu}} - \frac{e^{-(\lambda + \mu)}}{4\lambda \mu},
  \end{equation}
  by noting that $P(X>Y)= P(X> \frac{X+Y}{2})$, and upper bounding that term conditioned on $X+Y=i$, $\forall i \in \mathbb{Z}$.

  \subsection{Proof of Theorem \ref{thm:gen_case}}\label{app:gen_case}

 As in proof of Theorem \ref{thm:skip1}, the probability that a averaged weight $d$ path is preferred to the all zero path is given by (as this is a race between a the number of erroneous reads at positions of difference $\sim$ Poiss$(d\theta c(p)\ln n )$ and the number of correct reads at positions of difference $\sim$ Poiss$(d(1-\theta)c(p) \ln n )$ .),
 \begin{equation*}
  P[\text{averaged weight $d$ path kills all zero}] \le  \left(\frac{1}{n^{c(p) (1- e^{-D(\theta)})}}\right)^{d}.
 \end{equation*}
 
  Let,
 
 \begin{equation*}
  \tau = \min_{\substack{i\in \{1,2, \cdots, w \} \\ p_i > 0}} p_i.
 \end{equation*}
 \begin{equation*}
  \gamma(p)={c(p) (1- e^{-D(\theta)})}
 \end{equation*}

 We thus have that in every $2^{w-1}+1$ stages any path on the trellis that does not visit the all zero stage adds a averaged weight of at least $\tau$ (because code is not catastrophic implies, that we can not get from any state other than the all zero state to itself, without adding any weight, and one has to visit some state twice in $2^{w-1}+1$ stages). Let $\epsilon = \left\lceil{\frac{\tilde{d}_{free}}{\tau}}\right\rceil$. Thus the number of paths in the trellis that have averaged weight less than $\tilde{d}_{free} + \ell \tau$, is at most $2^{w(2^{w-1} +1)(\ell + \epsilon)}$.
 
 As in the proof of Theorem \ref{thm:skip1}, we can show that the probability that the all $0$ path will be killed in the first $2^{w(2^{w-1} +1)\epsilon}$ stages or the last stage go to $0$ as $n\rightarrow \infty$. We now restrict our attention to the all $0$ path being killed in the middle stages, where any codeword competing with the all $0$ codeword has averaged weight at least $\tilde{d}_{free}$.
 
 The probability of a averaged weight $d$ killing the all zero path is less than the probability that a averaged weight $\tilde{d}_{free} + \ell  \tau$ path kills the all zero, where $\ell$ is picked such that $\tilde{d}_{free} + \ell  \tau < d \leq \tilde{d}_{free} + (\ell+1)  \tau$. Thus we have that, the probability of error at any stage here is,
 \begin{align*}
 P(E) &\leq  \left(\frac{2^{w(2^{w-1} +1)\epsilon}}{n^{\gamma(p)}}\right) ^{\tilde{d}_{free}} \left( \sum_{\ell=1}^{\infty}   \left(\frac{2^{w(2^{w-1} +1)}}{n^{\gamma(p)}}\right)^{\ell} \right) \\
 &= \left(\frac{1}{n^{\gamma(p)}}\right) ^{\tilde{d}_{free}} O(1),
 \end{align*}
 for large enough $n$.
 
 Thus, by union bounding the probability of error across all stages as before, we can show that if 
 \begin{equation}
  c(p) > \frac{1}{\min \{ \mathbb{E}(W), 2 \} (1-e^{-D(\theta)})},
 \end{equation}
 then we have perfect reconstruction.
 
 The argument for necessity is essentially the same as that of Theorem \ref{thm:skip1}, with a similar quantization argument.

\end{document}